\newtheorem{thm}{Theorem}
\newtheorem{lem}{Lemma}
\newtheorem{prop}{Proposition}
\newcounter{rmk}
\definecolor{red}{RGB}{139,0,18}
\definecolor{lightred}{RGB}{186,25,31}
\definecolor{blue}{RGB}{0,56,108}
\definecolor{lightblue}{RGB}{69,100,139}
\renewcommand\emph[1]{{\color{red}\itshape #1}}
\def\t{ \mathrm{\scriptscriptstyle T} }
\def\mR{\mathds{R}}
\def\t{\mathrm{\scriptscriptstyle T} }
\def\argmin{\mathop{\arg\min}}
\title{An iterative algorithm for high-dimensional linear models with both sparse and non-sparse structures}
\author{Shun Yu and Yuehan Yang\thanks{Corresponding Author. Email: yyh@cufe.edu.cn.}\\
{\it \small School of Statistics and Mathematics, Central University of Finance and Economics,}\\
{\it \small Beijing, China}}
\date{}
\begin{document}
\maketitle
\begin{abstract}
Numerous practical medical problems often involve data that possess a combination of both sparse and non-sparse structures. Traditional penalized regularizations techniques, primarily designed for promoting sparsity, are inadequate to capture the optimal solutions in such scenarios. To address these challenges, this paper introduces a novel algorithm named Non-sparse Iteration (NSI). The NSI algorithm allows for the existence of both sparse and non-sparse structures and estimates them simultaneously and accurately. We provide theoretical guarantees that the proposed algorithm converges to the oracle solution and achieves the optimal rate for the upper bound of the $l_2$-norm error. Through simulations and practical applications, NSI consistently exhibits superior statistical performance in terms of estimation accuracy, prediction efficacy, and variable selection compared to several existing methods. The proposed method is also applied to breast cancer data, revealing repeated selection of specific genes for in-depth analysis.
\end{abstract}

{\bf Keywords:} {Non-sparse structures; Sparse structures; Coordinate descent; Precision matrix}

\section{Introduction}
As information technology continues to advance, many fields such as finance, medicine, and environmental science require analysis of high-dimensional and large sample data. To generate easily interpretable and meaningful regression models, many regularization methods have been developed based on the assumption of sparsity. These methods include Lasso, proposed by \citet{tibshirani1996regression}, SCAD, a variable selection method based on non-concave penalized likelihood proposed by \citet{fan2001variable}, the elastic net proposed by \citet{zou2005elastic}, and the square root lasso method proposed by \citet{belloni2011square}. Additionally, \citet{hazimeh2020fast} developed a fast algorithm based on coordinate descent and local combinatorial optimization, which outperforms current sparse learning algorithms in terms of prediction, estimation, and variable selection metrics.

However, the aforementioned methods all rely on the sparsity assumption, which may not hold in many applications. Recently, the emergence of modeling dense signals has raised concerns \citep{dobriban2018high}. For instance, in the study of complex traits in biology, the signals associated with these traits tend to be distributed over most of the genome, as \citet{boyle2017expanded} proposed in their omnigenic model. This model further supports the polygenic nature of many complex traits and underscores the need that can effectively model dense signals without relying on sparsity assumptions.

Non-sparsity poses a significant challenge to traditional statistical methods and theoretical properties. Many researchers focus on the statistical inference and applications of non-sparse linear models.
For example, \citet{belloni2014inference} allowed the number of relevant variables to grow at $o(\sqrt n /\log p)$ and proposed to test univariate parameters in high-dimensional sparse models. \citet{zhu2017breaking} removed dimensionality restrictions on parametric hypothesis testing. Additionally, \citet{bradic2022testability} verified the accuracy of statistical inference without imposing model sparsity.
Furthermore, non-sparse settings are becoming increasingly prevalent. For example, \citet{zheng2021nonsparse} used principal component score vectors to obtain non-sparse effects, while \citet{zhao2022polygenic} investigated the statistical properties of polygenic risk scores in a high-dimensional but non-sparse setting. Additionally, \citet{shi2022single} considered single parameter inference of non-sparse logistic regression models. These studies demonstrate the growing interest in developing statistical methods that can effectively handle non-sparsity in high-dimensional settings.

To tackle the challenges posed by high-dimensional data in non-sparse settings and to enhance estimation accuracy, in this paper, we consider the high-dimensional linear models allowing both sparse and non-sparse structures. To accomplish this goal, we first propose a novel model that divides the regression problem into two components, the sparse part and the non-sparse part. Then we propose a novel algorithm named Non-sparse and sparse iteration (NSI) to efficiently estimate the complex model. NSI is an iterative algorithm with fixed residual variables and a global loss function. Specifically, to handle non-sparse models and allow for both sparse and non-sparse structures in high dimensional settings, we iteratively optimize the sparse part and the non-sparse part while minimizing the overall error size. This iterative approach helps to achieve improved accuracy in estimating parameters in non-sparse settings, where traditional methods may not perform well.

The proposed method has been compared to several existing methods through simulations, and the results demonstrate the superior performance of NSI in terms of statistical performance in estimation, prediction, and variable selection. Furthermore, the algorithm has been applied to a database of cancer prognostic risk scores, providing meaningful results that could potentially be used in clinical practice. NSI represents a promising new approach to analyzing high-dimensional non-sparse data, and its effectiveness has been demonstrated in both simulations and real-world applications.

The rest of this paper is organized as follows. Section 2 introduces the non-sparse iterative algorithm. In Section 3, we discuss the theoretical properties of the algorithm. Section 4 and Section 5 provide numerical simulations and applications.

\section{Model and methods}
\subsection{Model setting}
In this section, we first introduce some notations. We set $y={{({{y}_{1}},{{y}_{2}},...,{{y}_{n}})}^{T}}$ be the $n$-dimensional vector of response, $Z$ be the $n\times p$ design matrix with $p$ predictors, and $W$ be the $n\times q$ design matrix with $q$ predictors. We focus on the following high-dimensional model:
\begin{equation}\label{eq model}
y=Z\beta +W\gamma +\varepsilon,
\end{equation}
where $\beta =(\beta_1,\beta_2,\dots,\beta_p)^\t$ and $\gamma =(\gamma_1, \gamma_2,\dots,\gamma_q)^\t$ are the coefficient vectors of $Z$ and $W$, respectively, and $\varepsilon =(\varepsilon_1, \varepsilon_2, \dots, \varepsilon_n)^\t$ is the vector of random errors. Assuming that the coefficient vectors $\beta$ and $\gamma$ are sparse and non-sparse, respectively, we do not place any sparse restriction on $\gamma$. We consider a model that allows for both sparsity and non-sparsity, which is commonly observed in many applications. For example, in the study of human diseases, microbial communities associated with conditions such as obesity and inflammatory bowel disease may play a crucial role \citep{cai2023rank}. In the case of inflammatory bowel disease, the microbial communities inhabiting the bowel may be relevant, while those inhabiting other parts of the body may not be relevant. Therefore, we process the following two parts separately.

\textbf{Non-sparse part:} We leverage the information of the precision matrix to estimate the non-sparse regression coefficients. Following the idea in \cite{bradic2022testability}, we transform the challenge of estimating the non-sparse regression coefficients into estimating the corresponding non-sparse rows of the precision matrix of the covariates. Let $W$ be a set of $n$ independent and identically distributed vectors. We write $W_i$ be the $i$th sample vector of $W$ and let $\Sigma = E(W_i W_i^\t)$ be the covariance matrix of $W$. The precision matrix of $\Sigma$ is denoted by $\Omega = \Sigma^{-1}$. To estimate the coefficients, we start by considering the following function:
\begin{equation}\label{eq nonsparse}
\gamma_0  = \Omega W^\t y /n.
\end{equation}
The above function is obtained by using the definition of the true coefficients that only take into account the expectation of the dense model, that is, assume $y = W \gamma_0 + \epsilon$, and we calculate $E(y_i) = E(W_i) \gamma_0$. Based on $\Omega = \Sigma^{-1}$, we obtain \eqref{eq nonsparse}. Inspired by the above calculation, for the model \eqref{eq model}, we have
\[ E(y_i) = E (Z_i \beta) + E(W_i) \gamma, \]
thus we have,
\[EW_i^T{W_i}\gamma  = E{W_i^T}{y_i} - E{W_i^T}{Z_i}\beta ,\]
and for $\Omega = \Sigma^{-1}$,
\[\gamma = \Omega E[W^\t_i(y_i - Z_i \beta)].\]
Since $\beta$ is unknown, \eqref{eq nonsparse} is insufficient for the model \eqref{eq model} when $\beta$ has non-zero elements. To address this issue, we propose an iterative algorithm for solving $\hat{\gamma}$ and $\hat{\beta}$ iteratively. During each iteration, solving $\hat{\gamma}$ is equivalent to solving the following function:
\[ \hat \gamma = \Omega (W^\t y - Z \hat \beta). \]

\textbf{Sparse part:} Given $\hat \gamma$, the loss function of solving $\beta$ is as following,
\begin{equation}\label{eq sparse}
\hat \beta  = \argmin_\beta L(\beta, \hat \gamma; \lambda),
\end{equation}
where
\[ L(\beta, \hat \gamma; \lambda) \triangleq \dfrac{1}{2n}\|y - Z\beta  - W\hat \gamma\|^2_2 + \lambda\|\beta\|_1. \]
The loss function \eqref{eq sparse} is used as the objective function when estimating the sparse part of the model \eqref{eq model}. The sparsity is ensured by the tuning parameter $\lambda$. To provide a joint estimate for the true coefficient vectors, we propose an iterative algorithm that simultaneously recovers the two partial predictors.
In this algorithm, we first calculate the initial estimate using \eqref{eq nonsparse} and \eqref{eq sparse}. Then, during each iteration, we estimate the non-sparse and sparse parts sequentially. For the non-sparse part, we calculate the estimate using the estimated precision matrix and the partial residual. For the sparse part, we calculate the estimate using the least squares estimate of fitting the partial residual and the soft-thresholding operator. This strategy efficiently reduces the bias by constantly adjusting the target estimate with information from the other estimated coefficients. The details of the algorithm are as follows.

\begin{center}
\textbf{Algorithm: Non-sparse and sparse iteration}
\end{center}

\textbf{Step 1:} Given $\lambda $, $y$, $Z$, $W$, the initial estimate is generated as following,
\[{\hat \gamma^{[0]}} = \Omega W^\t y/n, \]
\[{{\hat \beta }^{[0]}} = \arg {\min _\beta }L(\beta,\hat \gamma^{[0]};\lambda ) = \arg {\min _\beta }\frac{{\| {y - Z\beta  - W\hat \gamma^{[0]}} \|_2^2}}{2n} + \lambda \| \beta\|_1 .\]

\textbf{Step 2:} Repeat the following steps until convergence:

For $j=1 ,2,...q$, we update $\gamma _{j}^{[t]}$ as following,
\[\hat \gamma _j^{[t]} =\frac{{{W}_j^\t ({y - \hat y}_{ - j}^{[t]} )}}{n}, ~\text{where}~ {\hat y}_{-j}^{[t]}{\rm{ = Z}}{{\hat \beta }^{[{\rm{t}}]}}{\rm{ + }}{{{W}}_{ - j}}\hat \gamma _{ - j}^{[t]}.\]

For $k=1, 2,...,p$, we update $\beta _{k}^{[t]}$ as following,
\[\beta_k^{[t]} = S (\frac{{Z_k^\t (y - {\hat y}_{- k}^{[t]} )}}{n} ,\lambda  ), ~\text{where}~{\hat y}_{-k}^{[t]}{\rm{ = }}{{{Z}}_{-k}}\hat \beta_{ - k}^{[t]}{{ + W}}{\hat \gamma ^{[t]}},\]
and $S (\alpha  ,\varsigma  ) = sign (\alpha){(\left| \alpha  \right| - \varsigma  )_ + } = \left\{ {\begin{array}{*{20}{l}}
{\alpha  - \varsigma,\quad \alpha  > {\rm{0 \ and \ }}\left| \alpha  \right| > \varsigma }\\
{\alpha  + \varsigma,\quad  {\rm{   }}\alpha  < {\rm{0 \ and\  }}\left| \alpha  \right| > \varsigma }\\
{0,\qquad \qquad \ \left| \alpha  \right| < \varsigma }.
\end{array}} \right.$

In the above algorithm, the precision matrix can be estimated by the graphical lasso \citep{friedman2008sparse}. In practice, we often use a naive approach that makes a product of the estimate $\hat \Omega$ and the rest terms to construct $\hat \gamma$. Compared to methods such as Lasso \citep{tibshirani1996lasso}, SCAD \citep{fan2001variable}, adaptive lasso \citep{zou2006adaptive}, Elastic net \citep{zou2005elastic}, and others, which are designed for sparse linear regression, NSI is specifically tailored to handle high-dimensional data with both sparse and non-sparse structures, which are commonly encountered in real-world applications. This unique feature allows NSI to simultaneously capture both dense and sparse signals, providing a more accurate estimation of the coefficients of the model.

A significant amount of research has been devoted to studying high-dimensional sparse linear models using penalty functions and iterative algorithms, such as \citet{hazimeh2020fast,mai2019iterative}. However, for problems in the high-dimensional non-sparse setting, there still exist many gaps. Previous work has focused on inference methods, as seen in \citet{zhu2017breaking,bradic2022testability}. Our algorithm attempts to fill this gap by combining penalty terms with iterative algorithms for non-sparse structures. Very few iterative algorithms have been studied in this type of structure. Experimental results on both real and synthetic datasets demonstrate that our algorithm outperforms other methods in terms of statistical performance in estimation, prediction, and variable selection. Overall, NSI offers several unique features that make it a valuable tool for high-dimensional data analysis, particularly in cases where the coefficient vector is non-sparse.

\section{Theoretical Results}

In this section, we focus on the properties of the high-dimensional linear model with a random design. Specifically, we assume that the $W \sim N(0, \Sigma)$, which is unknown. The matrix $Z$ can be either fixed or have a random design. Additionally, we introduce the noise term $\varepsilon \sim N(0,\sigma^2)$. In the high-dimensional setting, we assume that the number of samples $n$ is much smaller than the combined dimensions $p+q$. Both $q$ and $p$ are allowed to increase as the number of samples $n$ grows. We do not index them with $n$ for notational simplicity. We consider both sparse and non-sparse structures, i.e., the number of non-zero entries in the coefficient vectors, $|S| = | {i \in \{ 1,2, \cdots, p\} :{{\hat \beta }_i} \ne 0} | + | {j \in \{ 1,2, \cdots, q\} :{{\hat \gamma }_j} \ne 0} |$ is always greater than $q$ and smaller than $p + q$.

We first provide the following result to show that the proposed algorithm converges to the oracle solution.
\begin{thm}\label{thm 1}
Assume $ \Lambda_{max}(\Sigma) < \kappa_1 < \infty $. The proposed algorithm converges to the oracle solutions $\hat \beta^{\text{oracle}}$ and $\hat \gamma^{\text{oracle}}$, defined as follows:
\[ \hat \beta^{\text{oracle}} = \argmin\{ \|y - W\gamma - Z\beta\|^2_2 + \lambda\|\beta\|_1 \}, \]
and
\[ \hat \gamma^{\text{oracle}} = \Omega W^\t(y - Z\hat \beta^{\text{oracle}})/n.\]
\end{thm}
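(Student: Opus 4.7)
The plan is to recognize the NSI algorithm as block coordinate descent on the joint convex loss
\[ F(\beta,\gamma) = \frac{1}{2n}\|y - Z\beta - W\gamma\|_2^2 + \lambda\|\beta\|_1, \]
and then to show that the iterates converge to a stationary point which must coincide with the oracle pair in the statement. The first thing to verify is that each NSI update is a coordinate-wise minimizer of $F$: under the implicit column standardization $\|W_j\|_2^2 = \|Z_k\|_2^2 = n$, the formula $\hat\gamma_j^{[t]} = W_j^\t(y - \hat y_{-j}^{[t]})/n$ is exactly the unique minimizer of $\gamma_j \mapsto F(\hat\beta^{[t]}, \gamma)$ with all other coordinates frozen, and the soft-thresholding update for $\beta_k$ is the standard coordinate Lasso update obtained from the KKT conditions of the univariate subproblem.

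Next I would appeal to convergence theory for block coordinate descent on a composite convex objective of this structure: $F$ is convex, bounded below, continuous, and the non-smooth part $\lambda\|\beta\|_1$ is separable and acts only on the $\beta$-block. Standard monotone-convergence arguments then give that $F(\hat\beta^{[t]}, \hat\gamma^{[t]})$ is non-increasing and every limit point of the iterates is a stationary point of $F$. The assumption $\Lambda_{\max}(\Sigma) < \kappa_1$ together with $W \sim N(0,\Sigma)$ supplies, via Gaussian concentration, control on the sample Gram $W^\t W/n$ that rules out escape of the $\gamma$-iterates to infinity and upgrades ``every limit point'' to convergence of the whole sequence.

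To identify the limit, note that any fixed point $(\hat\beta^\infty, \hat\gamma^\infty)$ of the NSI sweep satisfies simultaneously the joint normal equation for $\gamma$,
\[ (W^\t W/n)\,\hat\gamma^\infty = W^\t(y - Z\hat\beta^\infty)/n, \]
and the Lasso KKT optimality
\[ \hat\beta^\infty \in \argmin_\beta \left\{ \frac{1}{2n}\|y - Z\beta - W\hat\gamma^\infty\|_2^2 + \lambda\|\beta\|_1 \right\}. \]
The second line is exactly the oracle definition of $\hat\beta^{\text{oracle}}$, and inverting the first yields $\hat\gamma^\infty = (W^\t W/n)^{-1} W^\t(y - Z\hat\beta^\infty)/n$; replacing the sample inverse by the population precision $\Omega = \Sigma^{-1}$ recovers $\hat\gamma^{\text{oracle}}$.

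I expect the main obstacle to be precisely that last substitution: the coordinate-descent fixed point naturally uses the sample Gram, whereas the theorem states the oracle in terms of the population $\Omega$. Closing the gap should require a concentration argument under $\Lambda_{\max}(\Sigma) < \kappa_1$ so that $(W^\t W/n)^{-1}$ is close to $\Omega$ with high probability on the $q$-dimensional subspace in which $\gamma$ lives, or a reinterpretation of $\Omega$ in the theorem as the sample-based $\hat\Omega$ from the graphical lasso alluded to after the algorithm. A secondary subtlety is that in the stated regime $p+q \gg n$ the matrix $W^\t W$ may be singular, so uniqueness of the coordinate-descent fixed point and the inversion leading to $\hat\gamma^\infty$ must be argued on the column space of $W$ rather than on all of $\mR^q$.
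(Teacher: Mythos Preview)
Your approach is essentially the paper's own: the appendix proof likewise casts the NSI sweep as coordinate descent on
\[
F(\beta,\gamma)=\tfrac{1}{2n}\|y-Z\beta-W\gamma\|_2^2+\lambda\|\beta\|_1
\]
and invokes Tseng's (2001) convergence result for separable penalties to conclude that the iterates converge to the minimizer, which it then identifies with $(\hat\beta^{\text{oracle}},\hat\gamma^{\text{oracle}})$. The obstacle you flag---that the coordinate-descent fixed point satisfies the \emph{sample} normal equation $(W^\t W/n)\hat\gamma^\infty=W^\t(y-Z\hat\beta^\infty)/n$ rather than the population-$\Omega$ relation in the theorem, and that $W^\t W$ need not be invertible when $q>n$---is genuine and the paper does not resolve it: its proof simply asserts the identification without bridging sample and population quantities, so your caution here is well placed rather than a defect of your plan.
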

The requirement in the theorem that $ \Lambda_{max}(\Sigma) < \kappa_1 < \infty $ ensures that the proposed algorithm converges to the oracle solutions, which represent the ideal solutions in terms of accuracy and sparsity. This condition is reasonable as $\Sigma$ represents the covariance matrix of predictors in the dense structure. We use this condition to guarantee that the sample covariance matrix has an exponential type tail bound. By incorporating this condition, the algorithm can effectively handle high-dimensional problems with both sparse and non-sparse structures, leading to the following upper bound.

\begin{thm}\label{thm 2}
Assume the restrictive eigenvalue condition, which states that for a positive constant $K$,
\[v^\t \{X^\t X/n \}v \geqslant \kappa_2 \|v\|^2_2,\]
holds for all $v \in G(S)$ where $G(S) = \{ v \in \mR^p: \|v_{S^c}\|_1 \leqslant 3 \| v_S \|_1 \}$. The following error bounds for the estimate hold,
\[P(\|\hat \gamma - \gamma\|^2_2 \geqslant \delta) \leqslant o(\exp(-n\delta^2)),\]
\[ P(\|\hat \beta - \beta\|^2_2 \geqslant \delta)\leqslant o(\exp(-n\delta^2)).\]
\end{thm}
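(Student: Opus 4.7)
The plan is to combine Theorem~\ref{thm 1} with a standard Lasso oracle-inequality argument and Gaussian concentration for sample covariance matrices. By Theorem~\ref{thm 1} the NSI iterates converge to the oracle pair, so it suffices to control $\|\hat \beta^{\text{oracle}} - \beta\|_2^2$ and $\|\hat \gamma^{\text{oracle}} - \gamma\|_2^2$ with the required tails.

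First I would handle the sparse part. Because the oracle definition holds $\gamma$ at its true value, $\hat \beta^{\text{oracle}}$ is the Lasso estimator for the effective model $y - W\gamma = Z\beta + \varepsilon$. Under the stated restrictive eigenvalue condition on $Z^\t Z/n$, a classical argument yields $\|\hat \beta^{\text{oracle}} - \beta\|_2^2 \leqslant C|S|\lambda^2/\kappa_2^2$ on the event $\mathcal{A} = \{\lambda \geqslant 2\|Z^\t \varepsilon/n\|_\infty\}$. Since each coordinate of $Z^\t \varepsilon/n$ is Gaussian with variance of order $1/n$, a Gaussian tail estimate together with a union bound over the $p$ coordinates gives $P(\mathcal{A}^c) = o(\exp(-n\delta^2))$ for $\lambda$ of order $\sqrt{\log p/n}$; rescaling $\delta$ delivers the second tail inequality in the theorem.

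For the non-sparse part, substituting $y = Z\beta + W\gamma + \varepsilon$ into the oracle definition and using $\Omega\Sigma = I$ produces the decomposition
\[ \hat \gamma^{\text{oracle}} - \gamma \;=\; \Omega\!\left(\frac{W^\t W}{n} - \Sigma\right)\!\gamma \;+\; \frac{\Omega W^\t \varepsilon}{n} \;+\; \frac{\Omega W^\t Z(\beta - \hat \beta^{\text{oracle}})}{n}. \]
I would bound the three summands separately. For the first, $\Lambda_{\max}(\Sigma) < \kappa_1$ together with the Gaussianity of $W$ gives an exponential-type tail on $\|W^\t W/n - \Sigma\|_{\text{op}}$, which transfers to the term after multiplication by $\Omega$ and $\gamma$. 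For the second, conditionally on $W$ it is Gaussian with covariance $\sigma^2\Omega W^\t W\Omega/n^2$ whose operator norm concentrates at rate $1/n$, yielding the desired Gaussian tail. For the third, I would feed in the Lasso bound from the sparse step and control $\|\Omega W^\t Z/n\|_{\text{op}}$ restricted to the active support $S$.

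The main obstacle is this cross-term $\Omega W^\t Z(\beta - \hat \beta^{\text{oracle}})/n$: it couples the sparse and non-sparse components through the random product $W^\t Z$ and depends simultaneously on the data-driven Lasso residual. I would handle it by working on the intersection of the Lasso-success event with a concentration event on $\|W^\t Z/n\|_{\text{op}}$ restricted to the columns indexed by $S$; each event fails with probability $o(\exp(-n\delta^2))$, and on their intersection the cross-term inherits the Lasso rate $\sqrt{|S|\log p/n}$. A final union bound over the three components, with a rescaling of $\delta$ to absorb constants, produces the tail inequality for $\hat \gamma$ and completes the proof of Theorem~\ref{thm 2}.
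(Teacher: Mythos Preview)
Your proposal is correct and follows essentially the same route as the paper: invoke Theorem~\ref{thm 1} to reduce to the oracle pair, bound $\hat\beta^{\text{oracle}}-\beta$ via the standard Lasso oracle inequality under the restricted eigenvalue condition, and control $\hat\gamma^{\text{oracle}}-\gamma$ by substituting the model and using $\Omega\Sigma=I$ together with Gaussian concentration for the sample covariance. Your treatment is in fact more explicit than the paper's, which first bounds $\widetilde\gamma-\gamma$ using the \emph{true} $\beta$ and then passes to $\hat\gamma^{\text{oracle}}$ in a single line, whereas you isolate and bound the cross-term $\Omega W^\t Z(\beta-\hat\beta^{\text{oracle}})/n$ directly.
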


The restrictive eigenvalue condition is a common requirement in obtaining upper bounds for penalized regularizations \citep{negahban2012unified}. Specifically, when considering the precision matrix $\Omega$, if we use the Graphical Lasso as the initial estimator with tuning parameter $\lambda^*$, the following result holds.
\begin{prop}
Assume there exists some constant $ \alpha \in (0,1] $ such that
\[ \max\limits_{ e \in S^c }\|\Gamma_{e,S} (\Gamma_{SS})^{-1}\|_1 \leqslant 1-\alpha,\]
where $\|\cdot \|_1$ denotes the usual $\ell_1$ norm of a vector and $\Gamma = \Theta^{-1} \otimes \Theta^{-1}$, with $\otimes$ denoting the Kronecker matrix product. Set $\lambda^*= 4\frac{M}{\alpha}(\frac{\log (np)^{\tau}}{n})^{1/2} $, where $\tau >1$, $ M >0$. Let $ \theta_{min} $ be the minimum absolute value of nonzero entries of $ \Theta $. Assume $\theta_{min} > 4K(\Gamma)(1+\frac{4M}{\alpha}) (\frac{\tau\log (nq)}{n})^{1/2}$, where $K(\Gamma) = \| \Gamma^{-1}_{SS} \|_{\infty}$ with $\| \cdot \|_{\infty}$ denoting the usual $\ell_\infty$ norm of a matrix, e.g. $\| \Sigma \|_\infty =\max\limits_{j = 1,\ldots,p}\sum_{j'=1}^p|\Sigma_{jj'}|$.
Further, suppose the sample size satisfies
\[(\dfrac{\tau\log(nq)}{n})^{1/2} \leqslant \{ 3(1+\dfrac{4M}{\alpha}) K(\Gamma) \|\Sigma\|_{\infty}^3 l_0 \}^{-1},\]
then the following result holds with probability at least $ 1 - \dfrac{1}{p^{\tau-2}}$:
\[\|\hat \gamma - \gamma\|^2_2 \leqslant 2K(\Gamma)(1+\dfrac{4M}{\alpha}) (\dfrac{\tau(p+q)\log(p)}{n})^{1/2}.\]
Following the same arguments as Theorem \ref{thm 1} and \ref{thm 2}, we have that the proposed algorithm with the initial estimate obtained from the graphical lasso also converges to the oracle solution and achieves the optimal rate for the upper bound of $l_2$ norm error.
\end{prop}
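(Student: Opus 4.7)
The strategy is to chain together the $\ell_\infty$-consistency of the graphical lasso with the iterative identity $\hat\gamma = \hat\Omega W^\t(y - Z\hat\beta)/n$, and then close the loop via Theorems \ref{thm 1} and \ref{thm 2}. The proposition is essentially a ``plug-in'' statement: once the graphical-lasso initializer $\hat\Omega$ is shown to satisfy a sharp entrywise bound on a high-probability event, the previously established oracle-convergence and error-rate results for NSI transfer verbatim to the data-driven $\hat\Omega$.

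First, I would verify that the hypotheses listed --- mutual incoherence with constant $\alpha$, the tuning choice $\lambda^* \asymp \sqrt{\log(np)^\tau/n}$, the minimum-signal bound on $\theta_{\min}$, and the stated sample-size restriction --- are precisely the primal--dual witness conditions used by Ravikumar, Wainwright, Raskutti and Yu (2011) for the graphical lasso. Applying their main theorem then yields, with probability at least $1 - 1/p^{\tau-2}$, an entrywise bound
\[ \max_{i,j}|\hat\Omega_{ij} - \Omega_{ij}| \leqslant 2K(\Gamma)(1 + 4M/\alpha)\sqrt{\tau\log p/n}, \]
together with sign-consistent recovery of the zero pattern of $\Omega$.

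Second, I would substitute $y = Z\beta + W\gamma + \varepsilon$ into $\hat\gamma = \hat\Omega W^\t(y - Z\hat\beta)/n$ to obtain
\[ \hat\gamma - \gamma = (\hat\Omega - \Omega)\frac{W^\t(y - Z\hat\beta)}{n} + \Omega\Bigl[\bigl(\tfrac{1}{n}W^\t W - \Sigma\bigr)\gamma + \tfrac{1}{n}W^\t Z(\beta - \hat\beta) + \tfrac{1}{n}W^\t\varepsilon\Bigr]. \]
The first piece is controlled by Step 1 times a suitable norm of the $q$-vector $W^\t(y - Z\hat\beta)/n$, which is where the dimension factor $\sqrt{q}$ enters. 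Inside the bracket, the covariance-concentration term $(W^\t W/n - \Sigma)\gamma$ admits a sub-exponential tail thanks to $\Lambda_{\max}(\Sigma) < \kappa_1$, the noise term is sub-Gaussian, and the cross term is bounded by feeding Theorem \ref{thm 2} on $\|\hat\beta - \beta\|_2$ into a bound on the operator norm of $W^\t Z/n$; this last step supplies the remaining $\sqrt{p}$ and assembles the $\sqrt{p+q}$ rate. Since $\hat\Omega$ is $o(1)$-close to $\Omega$ in the relevant norms, the hypotheses of Theorems \ref{thm 1} and \ref{thm 2} remain valid with $\hat\Omega$ in place of $\Omega$, so the NSI iterates initialized from the graphical lasso converge to the oracle at the same optimal $\ell_2$ rate.

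The main obstacle will be the probability bookkeeping. The stated failure probability $1/p^{\tau-2}$ must simultaneously absorb the graphical-lasso event, the covariance- and noise-concentration events, and the high-probability conclusion of Theorem \ref{thm 2}. The auxiliary events have exponentially small failure probabilities while the graphical-lasso event has only a polynomial tail, so a careful union bound shows the polynomial term dominates; however, tracking the constants so that the prefactor $2K(\Gamma)(1 + 4M/\alpha)$ in the stated bound is preserved --- without silently absorbing dependencies on $\kappa_1$ or $\kappa_2$ from Theorems \ref{thm 1} and \ref{thm 2} --- will be the delicate part.
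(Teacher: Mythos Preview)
Your proposal is correct and follows essentially the same route as the paper: the paper does not give a proof at all but simply states that the result follows from the graphical-lasso guarantees of \citet{ravikumar2011high} (and \citet{yyh2020two}) combined with the arguments behind Theorems~\ref{thm 1} and~\ref{thm 2}. Your two-step plan --- invoke the primal--dual witness $\ell_\infty$ bound on $\hat\Omega$ under the stated incoherence, signal-strength and sample-size conditions, then feed that into the decomposition of $\hat\gamma-\gamma$ and close with Theorems~\ref{thm 1}--\ref{thm 2} --- is exactly what the paper has in mind, and in fact you spell out more of the bookkeeping than the paper does.
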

The aforementioned result is derived from the properties of the Graphical Lasso and the proofs of Theorem \ref{thm 1} - \ref{thm 2}, thus we omit the proof and refer the readers to \citet{yyh2020two,ravikumar2011high}.

\section{Simulations}

In this section, we illustrate the performance of the proposed method and compare it with three methods: Lasso \citep{tibshirani1996regression}, Bradic \citep{bradic2022testability}, and Fast Best Subset Selection (FBS) \citep{hazimeh2020fast}. The R glmnet package \citep{friedman2010regularization} is utilized to run the lasso, and FBS is implemented using the R L0Learn package \citep{hazimeh2020fast}. Bradic is proposed for the non-sparse structure and unsuitable for the combination of the non-sparse and sparse structures. To address this problem, we propose using an adaptive version of Bradic in this paper. Specifically, we split the coefficient vector into two parts and estimate them separately. Especially, we estimate the coefficients of the non-sparse part using the Bradic; for the coefficients of the sparse part, we adopt commonly used regularization constraint methods for estimation, such as the Lasso. The results presented later for Bradic are from the adaptive version. We conduct 100 simulations for each setting.

Consider the following linear model:
\[y=Z\beta +W\gamma +\varepsilon,\]
where $\beta  = {{\rm{(}}\overbrace {4,4,...4}^{10},0,...0{\rm{)}}^\t} \in {R^p},\gamma  = {{\rm{(}}6,6,...,6{\rm{)}}^\t} \in {R^q}$,
and $\varepsilon  \sim N(0,I)$. Set $X = (Z, W)$. Given $\Omega$, we generate $X \sim N (0,\Sigma)$, where $\Sigma  = {\Omega ^{{\rm{ - 1}}}}$. We use the following two examples:

\textbf{Example 1:} We set two dimension settings $(p + q, n) = (100, 100)$ and $(p + q, n) = (400,400)$. We consider $\Omega  = I$ and different sparsity by changing the size of $q/(p + q)=0.5, 0.6, 0.7, 0.8$.

\textbf{Example 2:} Set $\Omega_{jj} = 1$ for $j = 1,\dots,(p+q)$, ${\Omega _{j,j - 1}} = {\Omega _{j - 1,j}} = \rho $ for $j = 2,\dots,(p+q)$. We fix the dimensions that $(p, q, n)= (50, 50, 100)$ and $(p, q, n)=(200, 200, 400)$. We consider the different structures of $\Omega$ by a changing $\rho=0, 0.1, 0.3, 0.5$.

To test the performance of the combinations of sparse and non-sparse structures, we vary the sparsity from $0.5$ to $0.8$, which means that the maximum number of nonzero coefficients achieves $9/10$ of the dimensions. In this case, we do not use higher dimensions in our examples. For all the methods, we use ten-fold cross-validation to select the tuning parameter. The means and standard deviations of the results are presented in Table \ref{table 1} - \ref{table 4} and consist of the following metrics.
\begin{itemize}
\item $\textbf{$l_1$-norm:} |\beta  - \hat \beta {|_1} + |\gamma  - \hat \gamma {|_1}$.
\item $\textbf{$l_2$-norm:} (|\beta  - \hat \beta |_2^2 + |\gamma  - \hat \gamma |_2^2)^{1/2}$.
\item \textbf{FPR:} False positive rate
\[\frac{{\mid j \in \{ 1,2, \cdots ,p\} :{{\hat \beta }_j} \ne 0~{\rm{ and }}~{\beta _i} = 0\mid  + \mid j \in \{ 1,2, \cdots ,q\} :{{\hat \gamma }_j} \ne 0~\rm{ and}~{\gamma _j} = 0\mid }}{{| {i \in \{ 1,2, \cdots ,p\} :{\beta _i} = 0} | + | {j \in \{ 1,2, \cdots ,q\} :{\gamma _j} = 0} |}}.\]
\item \textbf{TPR:} True positive rate
\[\frac{{\mid i \in \{ 1,2, \cdots ,p\} :{{\hat \beta }_i} \ne 0{\rm{ ~and~ }}{\beta _i} \ne 0\mid  + \mid j \in \{ 1,2, \cdots ,q\} :{{\hat \gamma }_j} \ne 0~\rm{ and }~{\gamma _j} \ne 0\mid }}{{| {i \in \{ 1,2, \cdots ,p\} :{\beta _i} \ne 0} | + | {j \in \{ 1,2, \cdots ,q\} :{\gamma_j} \ne 0} |}}.\]
\item \textbf{NZ:} The number of non-zero estimate
\[| {i \in \{ 1,2, \cdots ,p\} :{{\hat \beta }_i} \ne 0} | + | {j \in \{ 1,2, \cdots ,q\} :{{\hat \gamma }_j} \ne 0} |.\]
\end{itemize}

Table \ref{table 1} and Table \ref{table 2} present the comparison of all methods under the first example. It can be seen that our method consistently outperforms other methods in both estimation and selection. Specifically, when sparsity decreases, the proposed method accurately estimates the coefficients with lower errors and has a low standard deviation in all cases. Meanwhile, the selection result always has lower false positives. In contrast, Lasso and FBS tend to select many false positives. Moreover, in the higher dimensional case, NSI also maintains good performance as sparsity varies.

Example 2 investigates the scenario where variables are correlated. The results are presented in Table \ref{table 3} and Table \ref{table 4}. As the correlations among predictors become stronger, the estimation errors of all methods increase. NSI consistently outperforms the other methods in almost all cases of Example 2, regardless of the dimensionality. Specifically, NSI achieves the smallest prediction error across various settings, indicating its superior performance in handling data with complex correlation structures.

To further demonstrate the variation of the estimation error,  in Figure \ref{fig:method}, we compare the $l_2$-norm estimation error of the coefficients for all methods across all examples. The black line with a square symbol represents the mean of the $l_2$-norm error and the shaded region around it represents the standard deviation interval of the NSI, while the other lines with different symbols represent other methods. As shown in Figure \ref{fig:method}, in almost all cases, NSI exhibits the smallest error. In the low-dimensional case of the first example, the estimation error of NSI is significantly smaller than that of the other methods, while in the high-dimensional case, the increase in the errors of Bradic and NSI is more gradual, while the other two methods perform worse. Moreover, NSI maintains a more stable low error and standard deviation as the correlation between variables increases.

\begin{table}[!htp]
\centering
\small
\setlength\tabcolsep{7pt}
\caption{Performance comparison under example 1 ($p+q=100$)}
\label{table 1}
\scalebox{0.9}{
\begin{tabular}{ccccccc}
\hline
&Method&    $l_2$-norm&$l_1$-norm&FPR&TPR&NZ\\
\hline
$q/p+q=0.5$&    Lasso   &   13.418(5.458)   &   97.175(41.164)  &   0.549(0.102)    &   0.992(0.019)    &   81.43(3.783)    \\\cline{2-7}
NZ$=$60&Bradic  &   15.531(1.532)   &   104.671(9.877)  &   0.354(0.152)    &   0.99(0.017) &   73.58(6.609)    \\\cline{2-7}
&FBS    &   14.924(7.811)   &   104.327(57.092) &   0.263(0.117)    &   0.941(0.073)    &   66.95(2.879)    \\\cline{2-7}
&NSI    &   10.947(2.018)   &   67.395(13.566)  &   0.046(0.038)    &   0.988(0.016)    &   61.14(1.457)    \\\cline{2-7}
\hline
$q/p+q=0.6$
&Lasso  &   21.43(5.786)    &   161.782(46.491) &   0.597(0.115)    &   0.964(0.053)    &   85.37(5.624)    \\\cline{2-7}
NZ$=$70 &Bradic &   18.701(2.078)   &   133.847(14.929) &   0.376(0.194)    &   0.983(0.022)    &   80.12(6.7)  \\\cline{2-7}
&FBS    &   17.076(6.372)   &   126.676(48.769) &   0.383(0.1)  &   0.95(0.059) &   78(2.828)   \\\cline{2-7}
&NSI    &   12.344(2.116)   &   82.838(15.011)  &   0.076(0.051)    &   0.988(0.014)    &   71.43(1.526)    \\\cline{2-7}
\hline
$q/p+q=0.7$
&Lasso  &   26.889(4.178)   &   210.277(37.051) &   0.618(0.104)    &   0.94(0.045) &   87.57(4.13) \\\cline{2-7}
NZ$=$80&Bradic  &   22.479(2.88)    &   168.947(22.604) &   0.354(0.235)    &   0.971(0.029)    &   84.76(6.512)    \\\cline{2-7}
&FBS    &   22.293(2.802)   &   177.35(23.477)  &   0.547(0.108)    &   0.963(0.026)    &   88(1.98)    \\\cline{2-7}
&NSI    &   14.939(2.873)   &   108.005(22.546) &   0.098(0.071)    &   0.98(0.015) &   80.33(1.621)    \\\cline{2-7}
\hline
$q/p+q=0.8$
&Lasso  &   31.716(3.396)   &   256.766(31.817) &   0.646(0.17) &   0.91(0.048) &   88.37(4.911)    \\\cline{2-7}
NZ$=$90&Bradic  &   28.813(4.52)    &   226.294(35.451) &   0.363(0.299)    &   0.958(0.033)    &   89.83(5.527)    \\\cline{2-7}
&FBS    &   25.42(2.582)    &   202.318(23.739) &   0.501(0.162)    &   0.922(0.028)    &   88(2.079)   \\\cline{2-7}
&NSI    &   25.402(2.855)   &   197.596(23.915) &   0(0)    &   0.889(0)    &   80(0)   \\\cline{2-7}
\hline
\end{tabular}}
\end{table}

\begin{table}[!htp]
\centering
\small
\setlength\tabcolsep{6.5pt}
\caption{Performance comparison under example 1 ($p+q=400$)}
\label{table 2}
\scalebox{0.9}{
\begin{tabular}{ccccccc}
\hline
&   Method  &   $l_2$-norm  &   $l_1$-norm  &   FPR &   TPR &   NZ  \\
\hline
$q/p+q=0.5$
&   Lasso   &   15.772(3.634)   &   214.151(51.994) &   0.454(0.055)    &   1(0)    &   296.3(10.421)   \\\cline{2-7}
NZ $=$ 210  &   Bradic  &   14.894(0.826)   &   171.073(10.446) &   0.139(0.069)    &   1(0)    &   236.4(13.054)   \\\cline{2-7}
&   FBS &   13.937(12.638)  &   175.168(166.723)    &   0.061(0.085)    &   0.981(0.05) &   217.76(8.815)   \\\cline{2-7}
&   NSI &   10.271(1.139)   &   104.477(12.37)  &   0(0)    &   1(0)    &   210(0)  \\\cline{2-7}
\hline
$q/p+q=0.6$
&   Lasso   &   33.003(6.351)   &   477.789(96.258) &   0.568(0.054)    &   0.996(0.007)    &   334.1(7.368)    \\\cline{2-7}
NZ $=$ 250  &   Bradic  &   17.891(1.131)   &   223.333(15.448) &   0.169(0.062)    &   1(0)    &   275.32(9.237)   \\\cline{2-7}
&   FBS &   35.189(14.285)  &   484.134(197.894)    &   0.255(0.065)    &   0.933(0.059)    &   271.44(7.976)   \\\cline{2-7}
&   NSI &   15.861(1.493)   &   183.061(19.16)  &   0(0.001)    &   0.994(0.005)    &   248.56(1.328)   \\\cline{2-7}
\hline
$q/p+q=0.7$
&   Lasso   &   48.448(5.04)    &   734.187(83.283) &   0.613(0.049)    &   0.97(0.016) &   348.65(5.342)   \\\cline{2-7}
NZ $=$ 290  &   Bradic  &   21.865(1.44)    &   294.268(21.766) &   0.196(0.081)    &   1(0.001)    &   311.47(8.933)   \\\cline{2-7}
&   FBS &   31.953(2.606)   &   460.141(43.611) &   0.338(0.047)    &   0.996(0.006)    &   273.27(8.525)   \\\cline{2-7}
&   NSI &   21.71(1.445)    &   279.812(22.314) &   0(0)    &   0.977(0.006)    &   283.28(1.652)   \\\cline{2-7}
\hline
$q/p+q=0.8$
&   Lasso   &   59.236(3.554)   &   933.473(64.229) &   0.628(0.061)    &   0.939(0.018)    &   353.82(6.851)   \\\cline{2-7}
NZ $=$ 330  &   Bradic  &   27.746(1.992)   &   400.064(31.959) &   0.24(0.094) &   0.999(0.002)    &   346.45(6.819)   \\\cline{2-7}
&   FBS &   34.028(5.504)   &   522.845(83.916) &   0.434(0.063)    &   0.986(0.015)    &   355.94(4.343)   \\\cline{2-7}
&   NSI &   26.188(1.907)   &   370.355(30.524) &   0(0.002)    &   0.982(0.005)    &   323.94(1.722)   \\\cline{2-7}
\hline
\end{tabular}}
\end{table}

\begin{table}[!htp]
\centering
\small
\setlength\tabcolsep{12pt}
\caption{Performance comparison under example 2 ($p+q=100$)}
\label{table 3}
\scalebox{0.8}{
\begin{tabular}{ccccccc}
\hline
&   Method  &   $l_2$-norm  &   $l_1$-norm  &   FPR &   TPR &   NZ  \\
\hline
$\rho=0$    &   Lasso   &   13.418(5.458)   &   97.175(41.164)  &   0.549(0.102)    &   0.992(0.019)    &   81.43(3.783)    \\\cline{2-7}
&   Bradic  &   15.531(1.532)   &   104.671(9.877)  &   0.354(0.152)    &   0.99(0.017) &   73.58(6.609)    \\\cline{2-7}
&   FBS &   14.924(7.811)   &   104.327(57.092) &   0.263(0.117)    &   0.941(0.073)    &   66.95(2.879)    \\\cline{2-7}
&   NSI &   10.947(2.018)   &   67.395(13.566)  &   0.046(0.038)    &   0.988(0.016)    &   61.14(1.457)    \\\cline{2-7}
\hline
$\rho=0.1$  &   Lasso   &   13.452(5.846)   &   97.194(43.278)  &   0.567(0.097)    &   0.987(0.039)    &   81.91(4.508)    \\\cline{2-7}
&   Bradic  &   14.738(1.635)   &   100(11.537) &   0.374(0.17) &   0.993(0.013)    &   74.5(7.113) \\\cline{2-7}
&   FBS &   18.042(7.299)   &   125.81(53.722)  &   0.293(0.119)    &   0.915(0.072)    &   66.64(3.014)    \\\cline{2-7}
&   NSI &   11.443(1.977)   &   69.987(13.462)  &   0.052(0.039)    &   0.983(0.018)    &   61.01(1.345)    \\\cline{2-7}
\hline
$\rho=0.3$  &   Lasso   &   17.503(6.675)   &   129.369(52.184) &   0.631(0.098)    &   0.976(0.058)    &   83.78(5.573)    \\\cline{2-7}
&   Bradic  &   14.144(1.609)   &   94.582(10.097)  &   0.426(0.175)    &   0.987(0.025)    &   76.26(7.998)    \\\cline{2-7}
&   FBS &   25.565(2.74)    &   191.803(23.959) &   0.384(0.071)    &   0.858(0.046)    &   56.3(3.093) \\\cline{2-7}
&   NSI &   13.256(1.562)   &   80.706(10.409)  &   0.098(0.047)    &   0.955(0.028)    &   58.38(2.415)    \\\cline{2-7}
\hline
$\rho=0.4$  &   Lasso   &   33.159(2.995)   &   238.2(27.548)   &   0.389(0.101)    &   0.725(0.111)    &   59.04(9.476)    \\\cline{2-7}
&   Bradic  &   25.93(4.086)    &   165.96(25.042)  &   0.266(0.157)    &   0.965(0.034)    &   68.54(7.926)    \\\cline{2-7}
&   FBS &   30.065(2.497)   &   208.572(24.856) &   0.364(0.058)    &   0.732(0.046)    &   58.46(1.823)    \\\cline{2-7}
&   NSI &   24.757(3.075)   &   158.425(20.178) &   0.01(0.018) &   0.855(0.013)    &   51.69(1.161)    \\\cline{2-7}

\hline
\end{tabular}}
\end{table}

\begin{table}[!htp]
\centering
\small
\setlength\tabcolsep{10.5pt}
\caption{Performance comparison under example 2 ($p+q=400$)}
\label{table 4}
\scalebox{0.8}{
\begin{tabular}{ccccccc}
\hline
&   Method  &   $l_2$-norm  &   $l_1$-norm  &   FPR &   TPR &   NZ  \\
\hline
$\rho=0$    &   Lasso   &   15.772(3.634)   &   214.151(51.994) &   0.454(0.055)    &   1(0)    &   296.3(10.421)   \\\cline{2-7}
&   Bradic  &   14.894(0.826)   &   171.073(10.446) &   0.139(0.069)    &   1(0)    &   236.4(13.054)   \\\cline{2-7}
&   FBS &   13.937(12.638)  &   175.168(166.723)    &   0.061(0.085)    &   0.981(0.05) &   217.76(8.815)   \\\cline{2-7}
&   NSI &   10.271(1.139)   &   104.477(12.37)  &   0(0)    &   1(0)    &   210(0)  \\\cline{2-7}
\hline
$\rho=0.1$  &   Lasso   &   18.493(4.791)   &   254.584(69.482) &   0.504(0.058)    &   1(0.002)    &   305.69(10.902)  \\\cline{2-7}
&   Bradic  &   14.291(0.866)   &   162.385(11.005) &   0.16(0.053) &   1(0)    &   240.31(9.981)   \\\cline{2-7}
&   FBS &   25.323(16.889)  &   325.83(223.077) &   0.141(0.102)    &   0.945(0.071)    &   225.35(8.984)   \\\cline{2-7}
&   NSI &   11.888(1.211)   &   118.558(13.419) &   0(0.001)    &   0.999(0.002)    &   209.88(0.409)   \\\cline{2-7}
\hline
$\rho=0.3$  &   Lasso   &   25.972(6.234)   &   367.844(91.737) &   0.605(0.051)    &   0.999(0.004)    &   324.65(9.396)   \\\cline{2-7}
&   Bradic  &   13.248(0.841)   &   148.489(10.048) &   0.218(0.059)    &   1(0)    &   251.35(11.282)  \\\cline{2-7}
&   FBS &   61.781(1.619)   &   934.185(31.349) &   0.346(0.033)    &   0.774(0.028)    &   228.37(6.817)   \\\cline{2-7}
&   NSI &   16.359(0.481)   &   156.932(9.02)   &   0(0)    &   0.952(0)    &   200(0)  \\\cline{2-7}
\hline
$\rho=0.5$  &   Lasso   &   76.775(1.335)   &   1077.072(23.661)    &   0.214(0.03) &   0.417(0.044)    &   128.17(13.349)  \\\cline{2-7}
&   Bradic  &   49.973(4.115)   &   568.227(45.253) &   0.133(0.045)    &   1(0.002)    &   235.24(8.716)   \\\cline{2-7}
&   FBS &   63.641(2.063)   &   906.626(34.707) &   0.38(0.036) &   0.682(0.042)    &   215.28(11.908)  \\\cline{2-7}
&   NSI &   46.382(3.818)   &   524.155(41.784) &   0.045(0.017)    &   0.996(0.004)    &   217.7(3.359)    \\\cline{2-7}

\hline
\end{tabular}}
\end{table}

\begin{figure}[ht]
\centering
\subfigure[Example 1 ($p+q=100$)]{
\includegraphics[width=.48\textwidth,height=.38\columnwidth]
{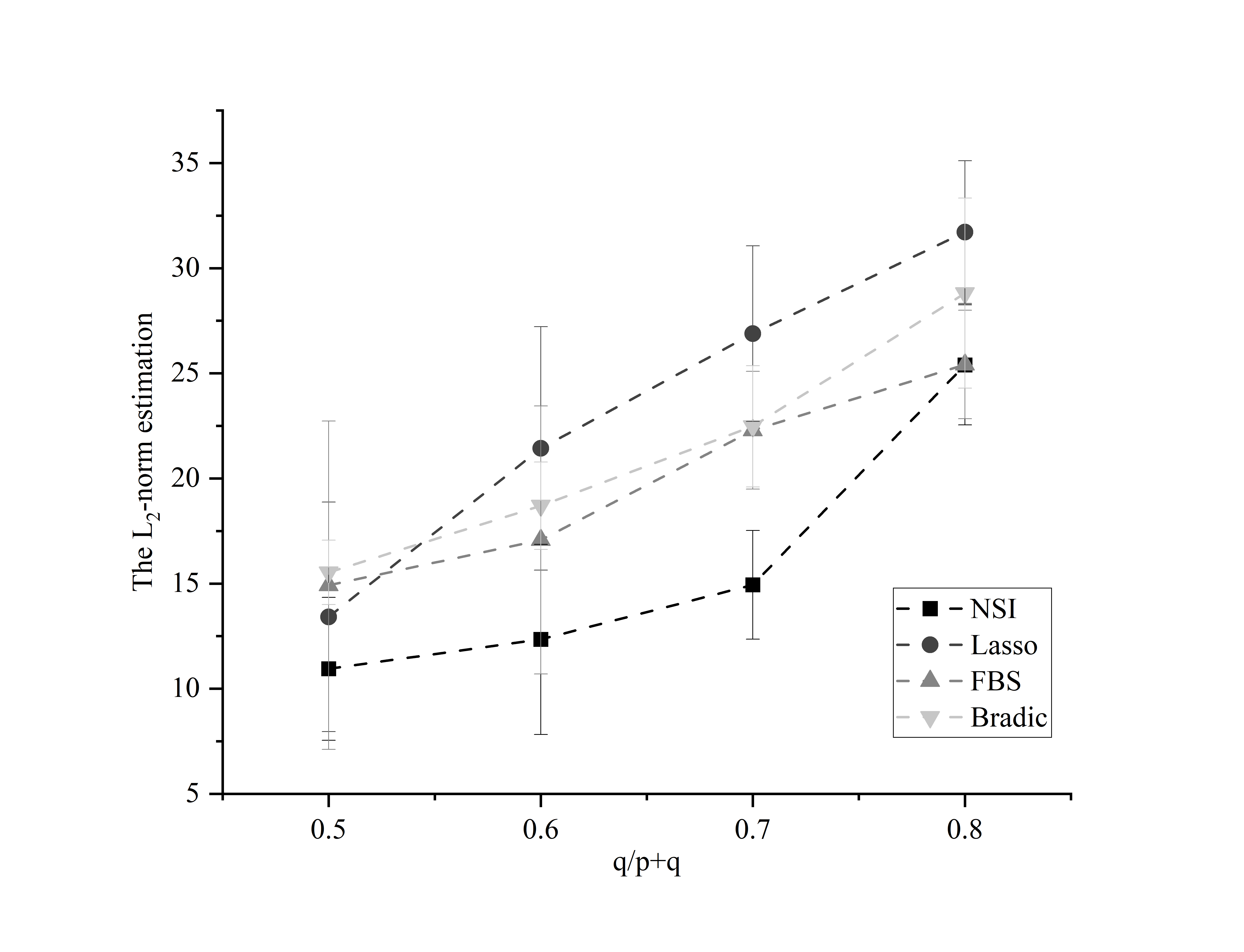}}
\subfigure[Example 1 ($p+q=400$)]{
\includegraphics[width=.48\textwidth,height=.38\columnwidth]
{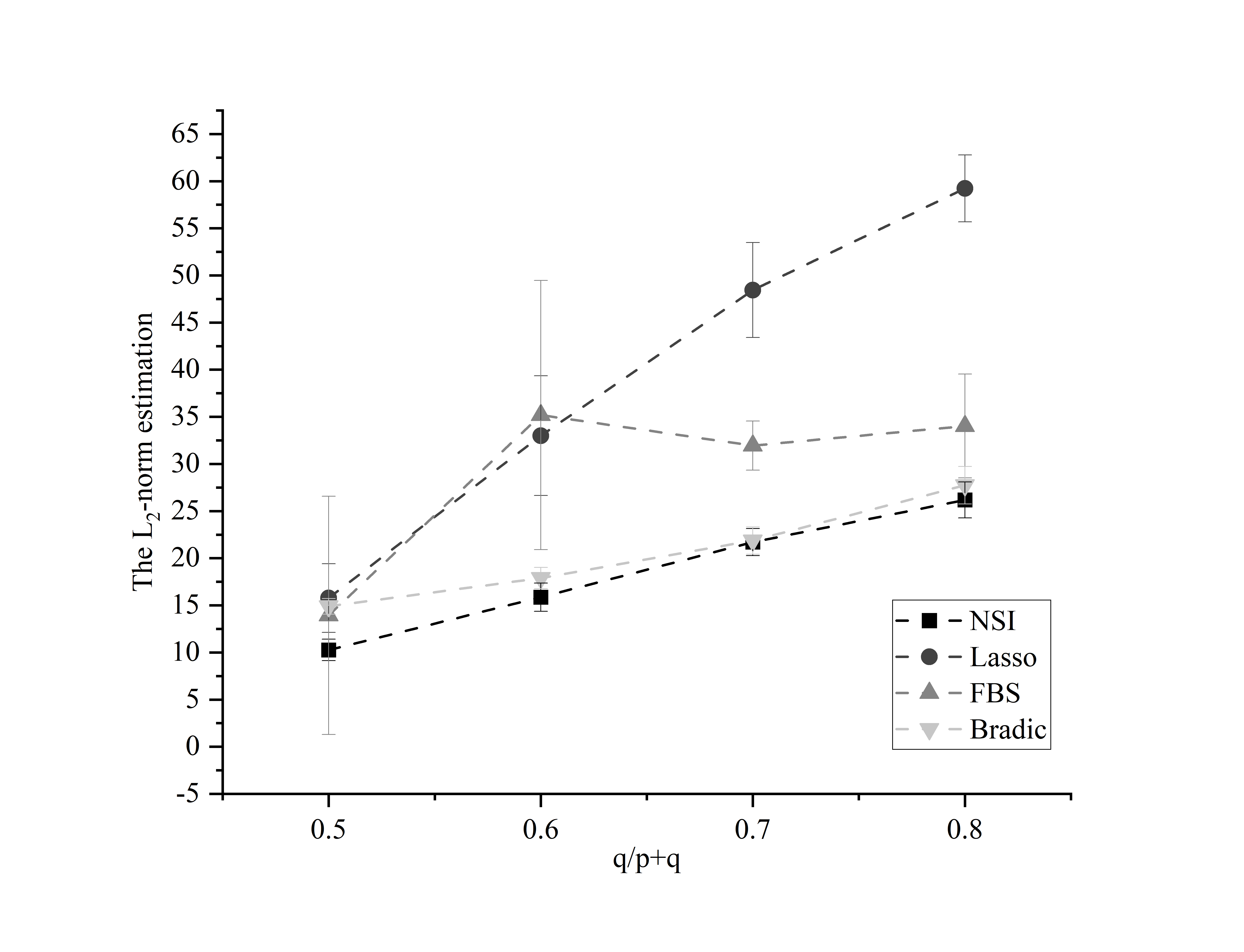}}
\subfigure[Example 2 ($p+q=100$)]{
\includegraphics[width=.48\textwidth,height=.38\columnwidth]
{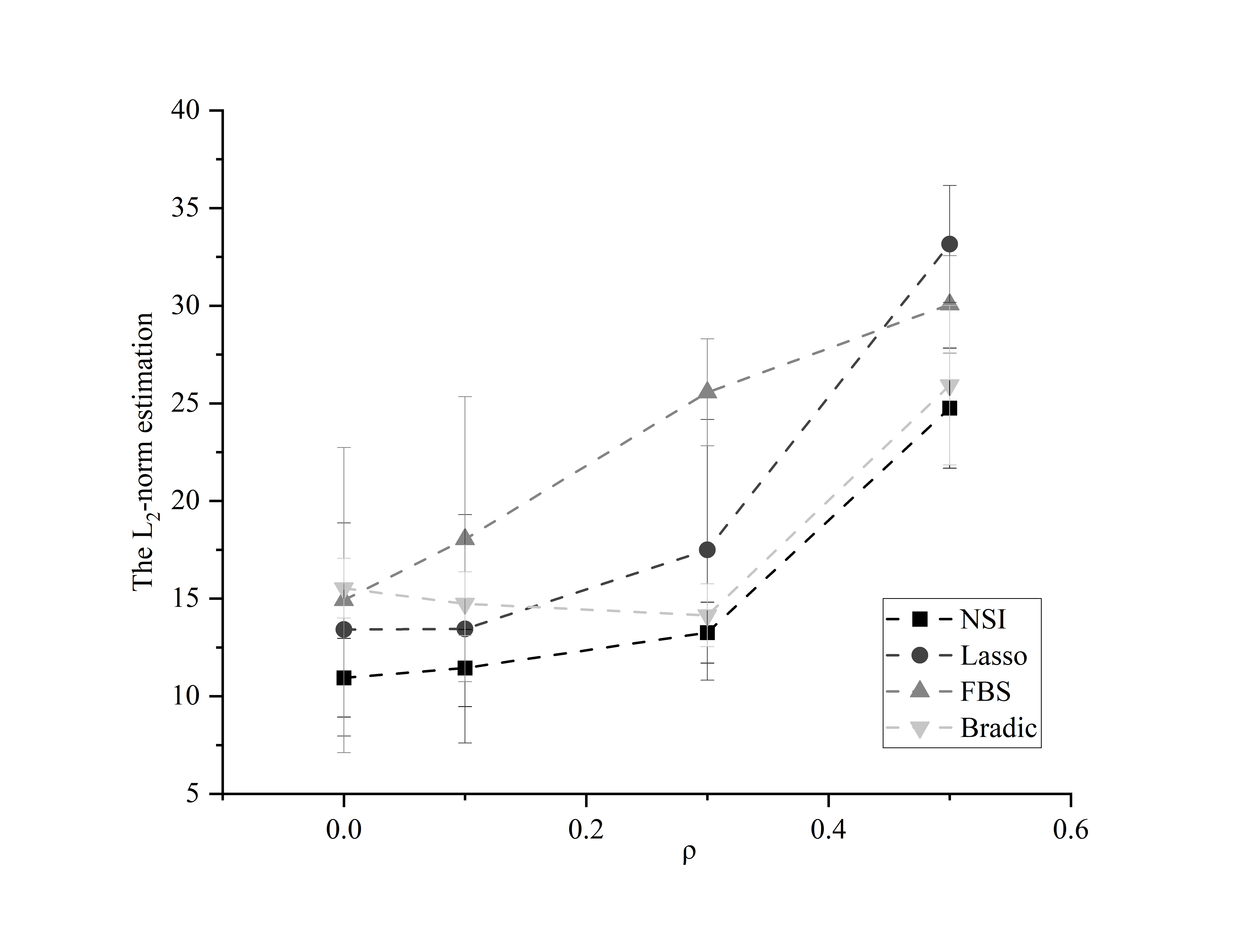}}
\subfigure[Example 2 ($p+q=400$)]{
\includegraphics[width=.48\textwidth,height=.38\columnwidth]
{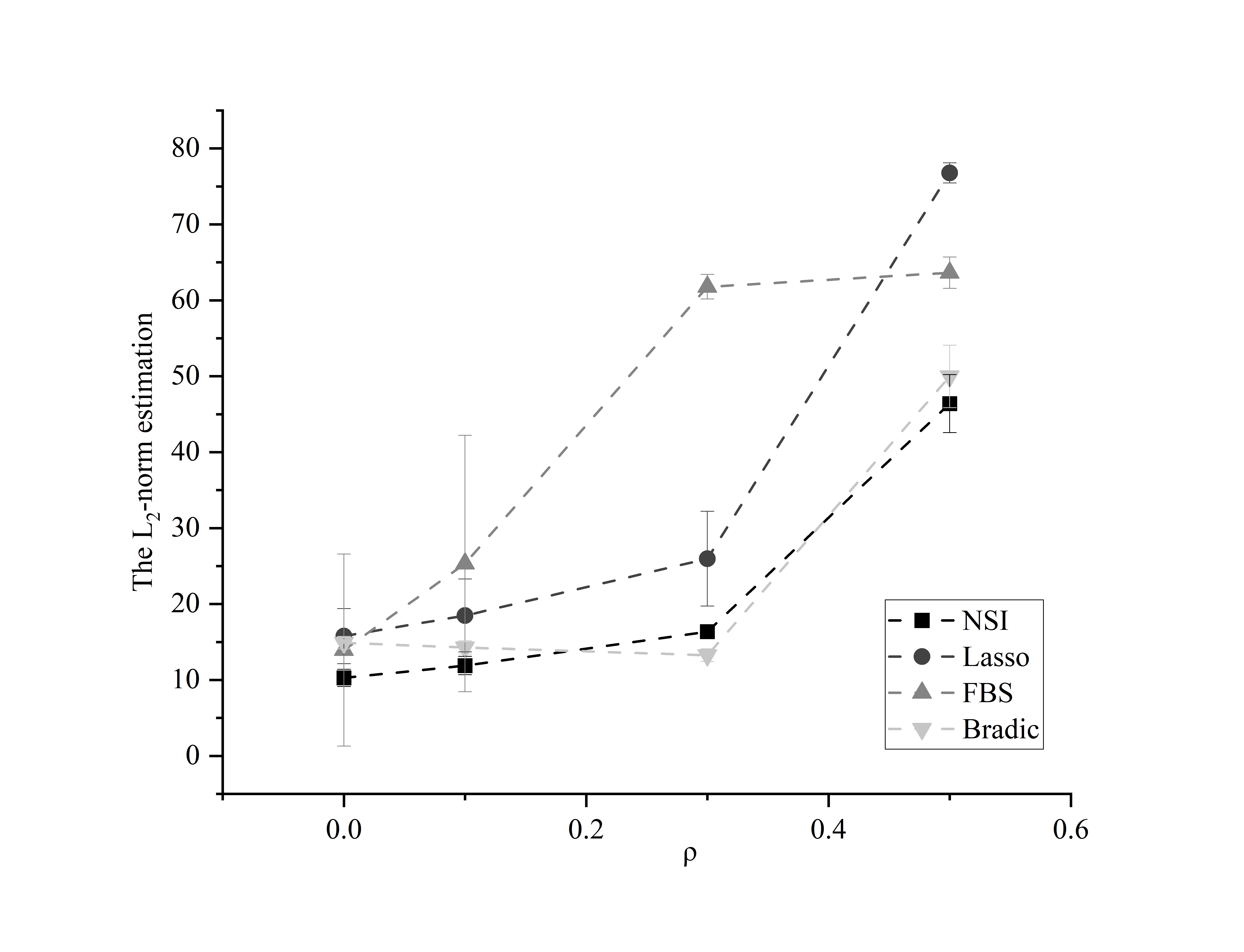}}
\caption{The $l_2$-norm estimation error: $(|\beta  - \hat \beta |_2^2 + |\gamma  - \hat \gamma |_2^2)^{1/2}$.}
\label{fig:method}
\end{figure}
\newpage
\section{Empirical Analysis}
In this section, we apply the proposed method to the expression data from invasive breast cancer patients. The data is from the GEO database\footnote{https://www.ncbi.nlm.nih.gov/geo/query/acc.cgi?acc=GSE102484}, provided by \citet{cheng2017validation}. They develop a classifier called 18-GC and quantify the prognostic risk of distant breast cancer metastasis to predict favorable or unfavorable prognosis of distant metastasis in the general breast cancer patient population. The data provides a total of 683 samples which are selected from a randomly selected breast cancer population at an independent cancer center, 54675 mRNAs extracted from frozen fresh tissue \citep{cheng2006genomic}, and the corresponding 18-GC score results. 18-GC can predict both local/regional recurrence and distant metastasis and will help enter a new era of precision medicine and breast cancer treatment \citep{cheng2017validation}.

We analyze the data by NSI and other methods, i.e., Bradic, FBS, and Lasso. Since we can not use FPR, TPR, and $l_2$-norm to measure the performance of the model, we use mean squared error (MSE) instead for comparison. We evaluate performance through the following operation. 70\% of the dataset is taken as the test set and the remaining 30\% is used as the validation set to predict the test data using the estimated model and to calculate the prediction error, and the results are in Table \ref{tablemse}. Also, the dataset involves a large number of genes while many of which have an insufficient effect on the 18-GC score. We calculate the correlation between each gene with the 18-GC score and we also explore the selection of genes and the effects by different methods under different thresholds $(0.4, 0.45, 0.5, 0.55, 0.6)$.
Figure \ref{fig:empirical} shows the MSE of the methods as well as the number of separately selected genes and commonly selected genes under different thresholds. In almost all cases, the prediction error of NSI is the lowest. At a threshold of 0.55, the prediction error is relatively lowest for almost all methods, and the number of genes commonly selected by the four methods is the highest at this time, indicating that the variables associated with the dependent variable are concentrated in the range of genes selected at this threshold.
\begin{figure}[ht]
\centering
\subfigure[MSE]{
\includegraphics[width=.48\textwidth,height=.38\columnwidth]
{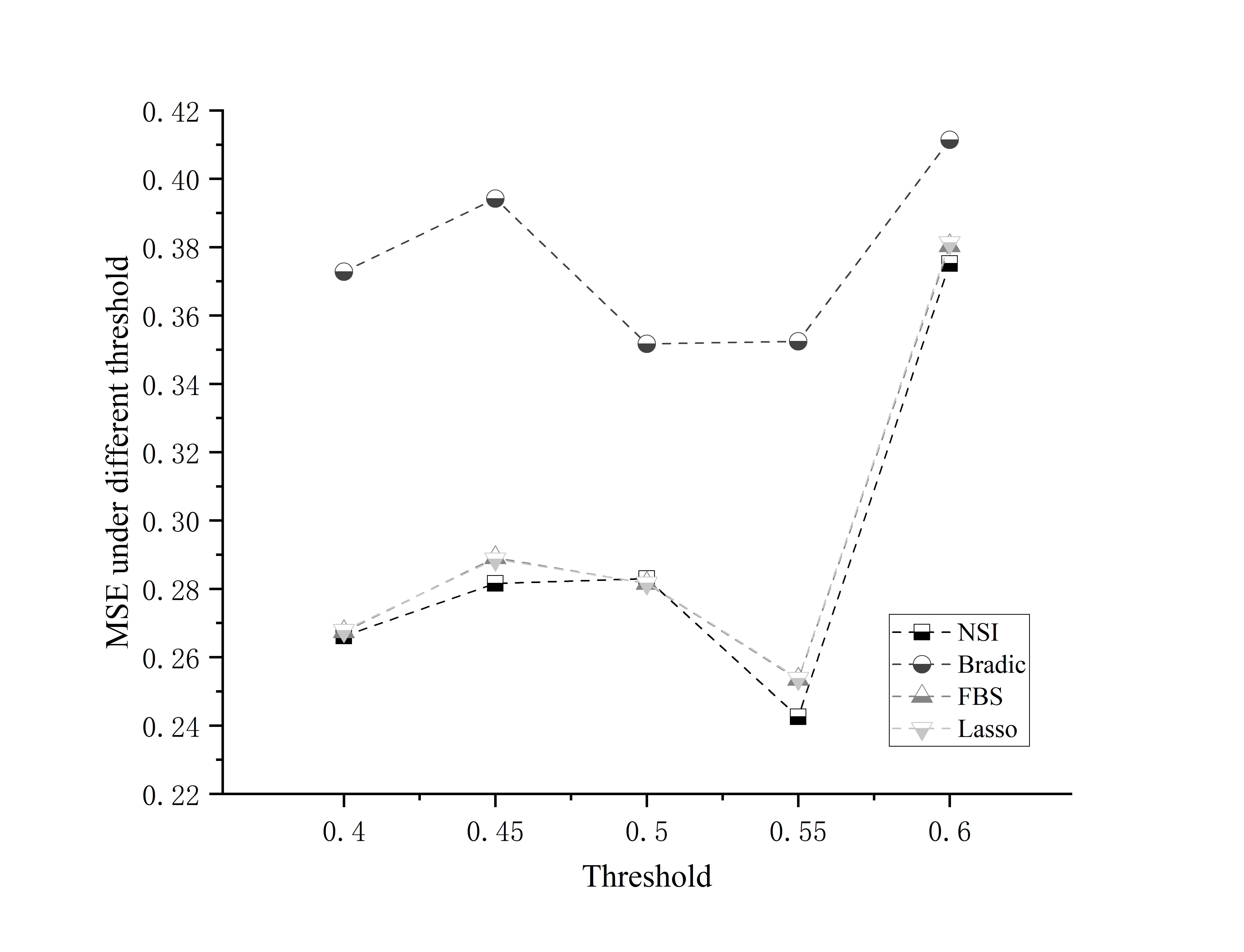}}
\subfigure[Number of selected genes]{
\includegraphics[width=.48\textwidth,height=.38\columnwidth]
{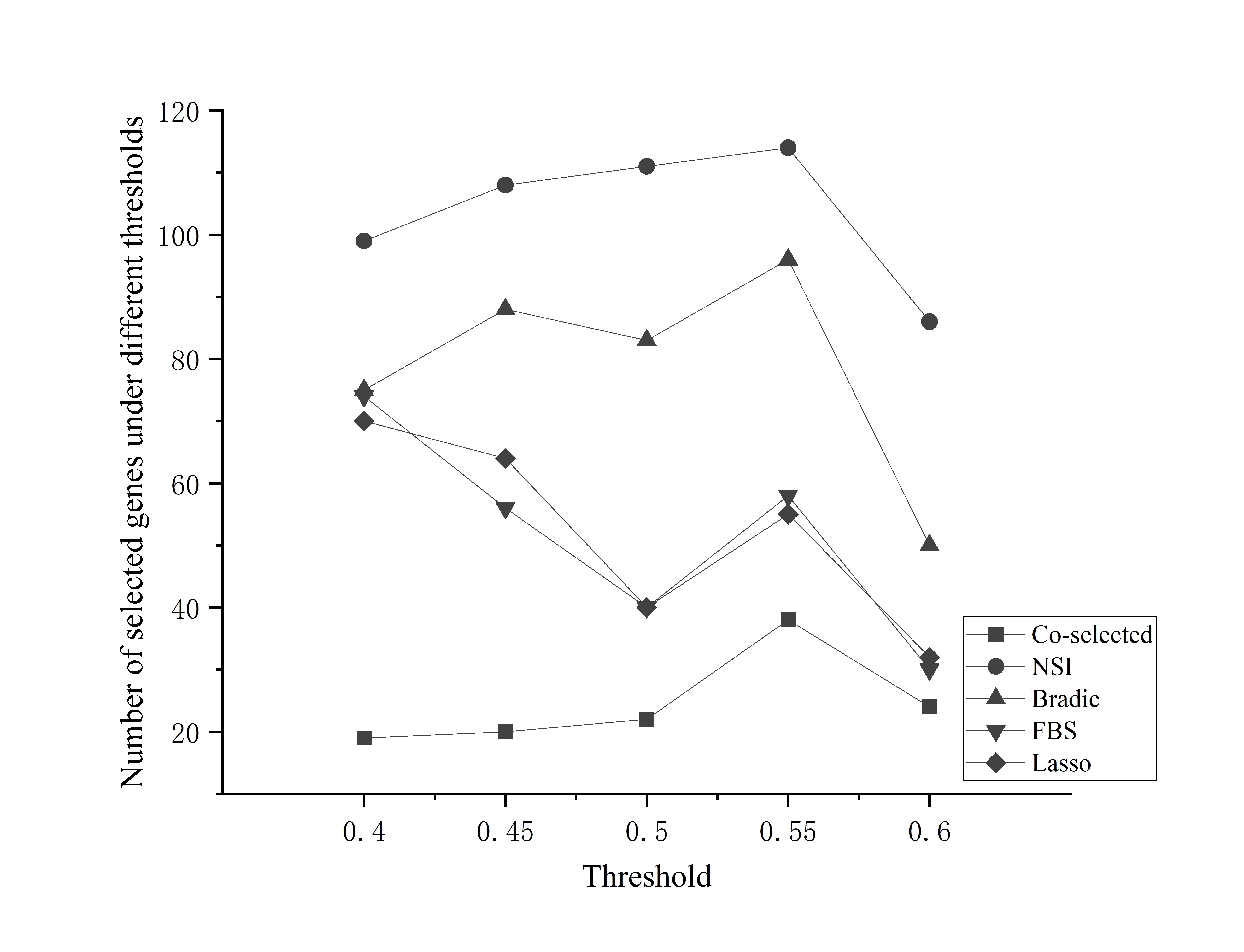}}
\caption{Performance of the methods under different thresholds}
\label{fig:empirical}
\end{figure}

\begin{table}[!htp]
\centering
\small
\setlength\tabcolsep{12pt}
\caption{MSE under different thresholds}
\label{tablemse}
\scalebox{0.8}{
\begin{tabular}{ccccc}
\hline
Threshold &NSI& Bradic& FBS&Lasso\\
\hline
0.6&    0.3752& 0.4114& 0.3804& 0.3813
\\
0.55&   0.2425& 0.3524& 0.2535& 0.2538
\\
0.5&    0.2830& 0.3517& 0.2816& 0.2817
\\
0.45&   0.2816& 0.3942& 0.2891& 0.2887
\\
0.4 &0.2662 &0.3728&    0.2674& 0.2678
\\
\hline
\end{tabular}}
\end{table}

Table \ref{tablenumber} shows the number of genes remaining under different thresholds, the number of common genes selected by all the methods, and the number of genes that are selected by each method. Several genes are repeatedly selected at different thresholds, i.e., Fanca, Rnf213, E2f8, Ddx39a, Bub1b, Rrm2, Fanci, and Tacc3. They are all known to be important factors associated with breast cancer. Specifically, Fanca and Fanci are two genes with a high frequency of deleterious mutations detected in breast cancer patients \citep{xiao2021characterization}. In breast cancer, the endocrine therapeutic response can be predicted by measuring the expression levels of specific circRNAs, and DDX39a is involved in circRNA export \citep{kristensen2022emerging}. E2F8 is significantly elevated in breast cancer cell lines and clinical breast cancer tissue samples, respectively \citep{ye2016upregulation}. Bub1b kinase can be mutated in different human malignancies, including hematopoietic, colorectal, lung, and breast cancers \citep{dai2004slippage}. Rrm2 can also be used as a biomarker for breast cancer prognosis \citep{shi2022high}. The expression levels of TACC3 mRNA and protein are higher in breast cancer tissues than in paraneoplastic tissues \citep{jiang2016clinical}.

\begin{table}[!htp]
\centering
\small
\setlength\tabcolsep{12pt}
\caption{Number of selected genes under different thresholds}
\label{tablenumber}
\scalebox{0.8}{
\begin{tabular}{ccccccc}
\hline
Threshold&  Remaining genes&Commonly selected genes &NSI&   Bradic& FBS&Lasso\\
\hline
0.6&    96&24&  86& 50& 30& 32\\
0.55&   178&38& 114&96& 58& 55\\
0.5&    483&22& 111&    83& 40& 40\\
0.45&   904&20& 108&88& 56& 64\\
0.4 &1934&19&   99& 75& 74& 70\\
\hline
\end{tabular}}
\end{table}

\section{Summary}
In this paper, we consider the problem of estimating high-dimensional data containing both non-sparse and sparse structures. To achieve high accuracy, we propose a novel iterative algorithm called Non-sparse Iteration (NSI). This algorithm is designed to perform prediction and variable selection when dealing with coefficient vectors that contain both non-sparse and sparse components. The NSI algorithm performs separate estimations for the non-sparse and sparse parts of the coefficient vector. During each iteration, we update the estimate of each coefficient by taking into account the information from other estimated coefficients. This approach is similar to coordinate descent and effectively reduces the estimation error while maintaining computational efficiency. We show that the proposed algorithm converges to the oracle solution and achieves the optimal rate for the error bounds. We evaluate the methods through simulations and real data sets. The simulations show that NSI exhibits strong performance in both estimation and selection. This method estimates the model accurately with low error and low FPR rate and still performs well both when the number of relevant variables becomes large or when the correlations between variables become large.

NSI also performs well on real data, such as genetic data analysis. In genetic data, samples are often difficult to obtain, thus genetic data usually present a situation where the number of samples is much smaller than the number of variables while the number of important variables is also large. Further, in the case of gene expression analysis, which is usually a complex project involving many factors that interact with each other, single sparse models are no longer fittable to guarantee obtaining more accurate results. As analyzed in our empirical study, our proposed iterative algorithm can obtain smaller prediction errors and more accurate results.

For the problem studied in our paper, we have obtained relevant results and demonstrated the advantages of our approach through numerical simulations as well as empirical studies. There are still many directions worth studying. For example, the problem of hypothesis testing in non-sparse learning, how to generalize to non-linear models, and analyze the identifiability of results under non-sparse models \citep{dukes2021inference, toda2023comparison, tang2023bias}. These issues are interesting topics for future research.

\section*{Acknowledgments}
This work was supported by the National Natural Science Foundation of China (Grant No. 12371281); the Emerging Interdisciplinary Project, Program for Innovation Research, and the Disciplinary Funds of Central University of Finance and Economics.

\section*{Declaration of conflicting interests}
The author declared no potential conflicts of interest with respect to the research, authorship, and/or publication of this article.


\appendix
\section*{Appendix}
To prove Theorem \ref{thm 1}, we introduce the following Lemmas
\begin{lem}\label{lem1}
Let $ Z_i $'s be independent identically distributed from $ \mathcal{N}(0,\Sigma) $ and $ \Lambda_{max}(\Sigma) < \kappa_1 < \infty $. Then for $ \Sigma = \{\sigma_{jj'}\} $, the associated sample covariance $ \hat \Sigma =\{ \hat \sigma_{jj'} \} $ satisfies the tail bound
\[
P(|\hat \sigma_{jj'} - \sigma_{jj'}|> \delta) \leqslant K_1 \exp(-K_2n\delta^2),
\]
where $ K_1 $ and $ K_2 $ are positive constants depending on the maximum eigenvalue of $ \Sigma $, and $ |\delta| \leqslant \kappa_3 $ where $ \kappa_3 $ depends on $ \kappa_1 $ as well.
\end{lem}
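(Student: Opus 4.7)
The plan is to reduce the tail bound for $\hat\sigma_{jj'}-\sigma_{jj'}$ to a Chernoff-type bound on sums of squared Gaussians, for which closed-form moment generating functions exist. Write $\hat\sigma_{jj'}-\sigma_{jj'} = \tfrac{1}{n}\sum_{i=1}^n(Z_{ij}Z_{ij'}-\sigma_{jj'})$; the summands are iid and centered, but since each term is a product of two (jointly) Gaussian variables it is only sub-exponential, not sub-Gaussian, which is why the bound $|\delta|\leqslant\kappa_3$ appears.

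The key reduction is the polarization identity $Z_{ij}Z_{ij'}=\tfrac{1}{4}\bigl[(Z_{ij}+Z_{ij'})^2-(Z_{ij}-Z_{ij'})^2\bigr]$. Setting $U_i=Z_{ij}+Z_{ij'}$ and $V_i=Z_{ij}-Z_{ij'}$, we have $U_i\sim\mathcal{N}(0,\tau_+^2)$ and $V_i\sim\mathcal{N}(0,\tau_-^2)$ with $\tau_\pm^2=\sigma_{jj}+\sigma_{j'j'}\pm 2\sigma_{jj'}$, both bounded above by $4\Lambda_{\max}(\Sigma)<4\kappa_1$. Thus the deviation of interest splits into $\{\tfrac{1}{n}\sum_i U_i^2-\tau_+^2\}$ and $\{\tfrac{1}{n}\sum_i V_i^2-\tau_-^2\}$, and it suffices to bound each of these two normalized chi-square sums. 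I would then apply Markov's inequality with the explicit MGF $E\exp(t U_i^2/\tau_+^2)=(1-2t)^{-1/2}$ for $t<1/2$, optimize over $t$, and use the standard inequality $-\log(1-2t)-2t\leqslant 4t^2$ for $|t|$ sufficiently small to obtain a bound of the form $P(|\tfrac{1}{n}\sum_i U_i^2-\tau_+^2|>\eta)\leqslant 2\exp(-cn\eta^2/\tau_+^4)$ valid for $\eta\leqslant c'\tau_+^2$. The same estimate holds for the $V_i^2$ sum.

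Combining the two tail bounds via the union bound, and tracking that the factor $\tau_\pm^2$ is controlled by $\kappa_1$, yields
\[P(|\hat\sigma_{jj'}-\sigma_{jj'}|>\delta)\leqslant K_1\exp(-K_2 n\delta^2)\]
with $K_1,K_2$ depending only on $\kappa_1$, valid whenever $|\delta|\leqslant\kappa_3$ for some $\kappa_3=\kappa_3(\kappa_1)$ coming from the validity range of the MGF. If the $Z_i$'s are not centered at zero, a preliminary step is to decompose $\hat\sigma_{jj'}$ into the uncentered empirical product plus a cross term involving the sample means, and absorb the latter into the constants using Gaussian concentration of $\bar Z_j$ at rate $n^{-1/2}$.

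The main obstacle is not the concentration machinery itself (which is standard) but rather being careful about the dependence of all constants on $\kappa_1$ alone, so that the bound is uniform in $(j,j')$ and can subsequently be union-bounded across all entries of $\Sigma$ in the later theorems; this requires replacing $\tau_\pm^2$ by its worst-case upper bound $4\kappa_1$ and correspondingly restricting the range of $\delta$ to $|\delta|\leqslant\kappa_3(\kappa_1)$, which accounts for the stated caveat in the lemma.
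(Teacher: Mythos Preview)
Your proposal is correct and is precisely the standard argument underlying this lemma: the paper itself gives no proof but simply cites Lemma~3 of \citet{bickel2008regularized} and Lemma~1 of \citet{rothman2008sparse}, and the polarization-plus-chi-square-MGF route you outline is exactly the proof found in those references. One minor sharpening: since $\tau_\pm^2=(e_j\pm e_{j'})^\t\Sigma(e_j\pm e_{j'})\leqslant\|e_j\pm e_{j'}\|_2^2\,\Lambda_{\max}(\Sigma)\leqslant 2\kappa_1$ for $j\neq j'$, you can take the upper bound $2\kappa_1$ rather than $4\kappa_1$, though this only affects constants.
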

\begin{proof}
Detailed proof can be found in Lemma 3 of \citet{bickel2008regularized} and Lemma 1 of \citet{rothman2008sparse}. We omit proof and refer the readers to reference.
\end{proof}

\begin{proof}[Proofs of Theorem \ref{thm 1} and Theorem \ref{thm 2}]
Consider the model,
\[ y=Z\beta +W\gamma +\varepsilon. \]
For
$$\gamma = \Omega E[W^\t_i(y_i - Z_i \beta)]$$ and $$\widetilde\gamma = \Omega W^\t(y - Z \beta)/n,$$
we have that,
\begin{align*}
\widetilde\gamma - \gamma & = \Omega W^\t(W\gamma + \varepsilon)/n -\gamma\\
& = (\Omega W^\t W/n - I)\gamma + \Omega W^\t W\varepsilon/n
\sim \varepsilon,
\end{align*}
where the last term holds with exponential probability based on the condition of the noise term and Lemma~\ref{lem1}. Thus, we have that, if we use the true sparse coefficient vector, the following holds,
\[P(\|\widetilde\gamma - \gamma\|^2_2 \geqslant \delta) \leqslant o(\exp(-n\delta^2)).\]
Consider the oracle solution, such as
\[ \hat \beta^{\text{oracle}} = \argmin\{ \|y - W\gamma - Z\beta\|^2_2 + \lambda\|\beta\|_1 \}. \]
For the above oracle lasso solution, we have that, under the restricted eigenvalue condition, the following inequality holds with exponential probability too.
\[\|\hat \beta^{\text{oracle}} - \beta\|_2 \leqslant \dfrac{8\sigma}{\kappa_2} \sqrt{\dfrac{p_1\log p}{n}}. \]
Following the same notation above and set $\lambda = 4\sigma \sqrt{\log p / n}$, we have that
\[ P(\|\hat \beta^{\text{oracle}}  - \beta\|^2_2 \geqslant \delta)\leqslant o(\exp(-n\delta^2)).\]
We omit the proof of the above result and refer the readers to \citep{negahban2012unified}. In this case, set
\[ \hat \gamma^{\text{oracle}} = \Omega W^\t(y - Z\hat \beta^{\text{oracle}})/n. \]
Based on the above result, we have $$P(\|\hat \gamma^{\text{oracle}} - \gamma\|^2_2 \geqslant \delta) \leqslant o(\exp(-n\delta^2)).$$ Now we need to prove that in the iteration, $\hat \beta$ and $\hat \gamma$ converge to $\hat \beta^{\text{oracle}}$ and $\hat \gamma^{\text{oracle}}$ respectively. We use coordinate descent during the algorithm. \citet{tseng2001convergence} proved that the coordinate descent converges to the minimizer of
\[ f(\beta_1,\dots,\beta_p) = g(\beta_1,\dots,\beta_p) + \sum^p_{j= 1} h_j(\beta_j). \]
The key to the convergence is the separability of the penalty function $\sum^p_{j= 1} h_j(\beta_j)$. During the iteration, we have that both
\[\hat \gamma _j^{[t]} =\frac{{{W}_j^\t ({y - \hat y}_{ - j}^{[t]} )}}{n}, ~\text{where}~ {\hat y}_{-j}^{[t]}{\rm{ = Z}}{{\hat \beta }^{[{\rm{t}}]}}{\rm{ + }}{{{W}}_{ - j}}\hat \gamma _{ - j}^{[t]}\]
and
\[\beta _j^{[t]} = S (\frac{{Z_j^\t (y - {\hat y}_{ - j}^{[t]} )}}{n} ,\lambda  ), ~\text{where}~{\hat y}_{ - j}^{[t]}{\rm{ = }}{{{Z}}_{ - j}}\hat \beta _{ - j}^{[t]}{{ + W}}{\hat \gamma ^{[t]}}\]
are separable. Following the same arguments as the proof of \citet{tseng2001convergence}, we obtain that the proposed algorithm converges to the optimal solutions, such as the oracle solution $\hat \beta^{\text{oracle}}$ and $\hat \gamma^{\text{oracle}}$.
\end{proof}

\end{document}